    \pgfplotsset{compat=1.15}
    \theoremstyle{definition}
	\newtheorem{lemma}{Lemma}
	\newtheorem{theorem}{Theorem}
	\newtheorem{example}{Example}
    \newtheorem{remark}{Remark}
\definecolor{myDarkGreen}{rgb}{0.00000,0.48824,0.00000}%
\def\lf{\left\lfloor}
\def\rf{\right\rfloor}
\def\lc{\left\lceil}
\def\rc{\right\rceil}
\def\define{\stackrel{\Delta}{=}}
\def\n{N}
\newcommand{\ux}{\underline{x}}
\newcommand{\uv}{\underline{v}}
\tikzstyle{point-visible} = [debugpoint,inner sep=0pt, minimum size = 2,fill=red] 
\tikzstyle{point-invisible} = [coordinate]
\tikzstyle{point} = [point-invisible] 
\colorlet{amp3color}{green!70!black}
\colorlet{amp1color}{blue!70!black}
\colorlet{amp5color}{red!70!black}
\colorlet{amp7color}{orange!90!black}
\colorlet{ESScolor}{green!70!black}
\colorlet{SMcolor}{red!70!black}
\colorlet{CCDMcolor}{red}
\colorlet{MPDMcolor}{blue}
\colorlet{SSHcolor}{black}
\title{Log-CCDM: Distribution Matching via Multiplication-free Arithmetic Coding}
\author{\IEEEauthorblockN{Yunus Can G\"{u}ltekin, Frans M. J. Willems, Alex Alvarado}
\IEEEauthorblockA{Information and Communication Theory Lab, Eindhoven University of Technology, The Netherlands\\
Email: y.c.g.gultekin@tue.nl}
}
\def\step{s}
\def\stepb{S}
\def\lgp{\text{Lg}^+}
\def\lgm{\text{Lg}^-}
\begin{document}

\maketitle

\begin{abstract}
    Recent years have seen renewed attention to arithmetic coding (AC).
    This is thanks to the use of AC for distribution matching (DM) to control the channel input distribution in probabilistic amplitude shaping.
    There are two main problems inherent to AC: (1) its required arithmetic precision grows linearly with the input length, and (2) high-precision multiplications and divisions are required.
    Here, we introduce a multiplication-free AC-based DM technique via three lookup tables (LUTs) which solves both problems above.
    These LUTs are used to approximate the high-precision multiplications and divisions by additions and subtractions.
    The required precision of our approach is shown to grow logarithmically with the input length.
    We prove that this approximate technique maintains the invertibility of DM.
    At an input length of 1024 symbols, the proposed technique achieves negligible rate loss ($<0.01$ bit/sym) against the full-precision DM, while requiring less than 4 kilobytes of storage.
\end{abstract}

%%%%%%%%%
\section{Introduction}\label{sec:intro}
%%%%%%%%%
Distribution matching (DM) converts a binary string into a symbol sequence with the desired distribution. 
DM is an essential block in probabilistic amplitude shaping (PAS) that controls the channel input distribution~\cite{BochererSS2015_ProbAmpShap}.
Thanks to this control, PAS achieves the capacity of the additive white Gaussian noise (AWGN) channel~\cite{Bocherer2014_ProbSigShapForBMD,Gultekin2020_RandomSignCoding}. 
PAS has also become popular in wireless~\cite{Schulte2019_SMDM,GultekinHKW2019_ESSforShortWlessComm} and fiber optical communications \cite{Buchali2016_RateAdaptReachIncrease,FehenbergerABH2016_OnPSofQAMforNLFC,idler2017_fieldtrialPS,Amari2019_IntroducingESSoptics}.

The DM technique used when PAS was introduced was constant composition distribution matching (CCDM)~\cite{SchulteB2016_CCDM}.
In CCDM, binary strings---which are the data sequences in this context---are converted into amplitude sequences with a fixed composition.
By selecting this composition such that the resulting amplitude distribution resembles a one-sided Gaussian distribution and by selecting the signs uniformly, a Gaussian-like channel input is obtained, which increases the achievable rates for the AWGN channel~\cite{BochererSS2015_ProbAmpShap} and optical channels~\cite{FehenbergerABH2016_OnPSofQAMforNLFC}.

When CCDM was first introduced~\cite{SchulteB2016_CCDM}, it was implemented using arithmetic coding (AC)~\cite{Ramabadran1990_moutofn}.
AC is a source coding algorithm that represents nonuniform source sequences by subintervals of the unit interval~\cite[Sec. 6.2]{MacKay_IT_Inf_Learning}.
Since DM is the dual operation to compression~\cite[p.222]{Fischer2002_PrecodingShaping}, AC can be used to realize DM in an ``inverted encoder-decoder pair'' setup~\cite{Martin1983ACforconstrainedchan} as illustrated in Fig.~\ref{fig:OverlapEffect}: matching is realized via AC decoding (decompression), dematching is realized via AC encoding (compression).

There are two main problems inherent to AC.
First, the required arithmetic precision grows linearly with the input length, which makes AC very challenging to realize for inputs longer than a thousand symbols.
Long blocklengths are necessary to decrease the rate loss of CCDM~\cite[Fig. 2]{SchulteB2016_CCDM}.
Second, multiplications and divisions are necessary to realize the algorithm.
There is a large amount of research dedicated to solving these problems in the context of source coding, see e.g.,~\cite{Tjalkens1987_DataCompressionthesis} and references therein.
For DM, the first problem is solved via a finite-precision implementation in~\cite{PikusXK2019_FPACDM}.
To the best of our knowledge, there is no study solving the second problem for DM.

\begin{figure}[t]
    \centering
\resizebox{0.64\columnwidth}{!}{\includegraphics{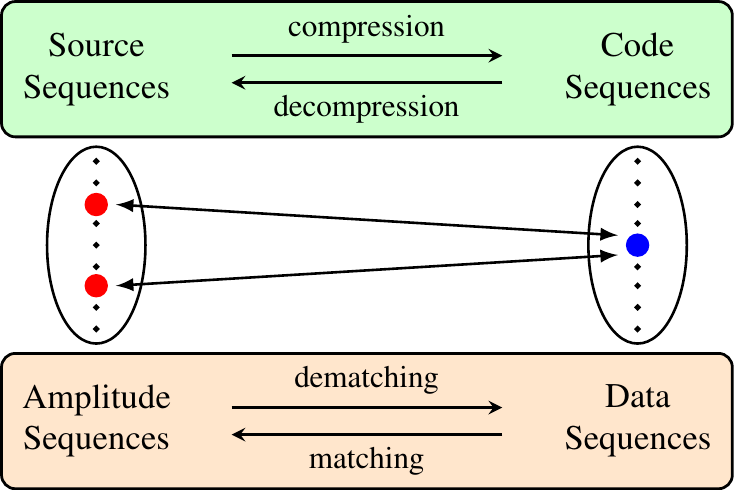}}
    \caption{Duality of source coding (top) and DM (bottom), and the effect of having overlapping subintervals in arithmetic coding (middle).}
    \label{fig:OverlapEffect}
\end{figure}

In this paper, we introduce ``Log-CCDM'', an approximate logarithmic (log) domain AC-based algorithm that solves both problems above.
Log-CCDM works based on three lookup tables (LUTs). 
High-precision multiplications and divisions required in AC are approximated by low-precision additions and subtractions in the log domain.
The required arithmetic precision of Log-CCDM grows logarithmically with the input length.
We prove that the invertibility of DM is maintained.
We demonstrate that the performance of this approximate algorithm (in terms of rate) depends on the sizes of the LUTs.
Requiring less than a few kilobytes (kBs), the rate loss of Log-CCDM against full-precision CCDM (FP-CCDM) is negligible, i.e., $<0.01$~bit/sym. 
This performance is achieved using 20-bit arithmetic operations which is comparable to that of the finite-precision algorithm of~\cite{PikusXK2019_FPACDM} that requires multiplications and divisions.

%%%%%%%%%%%%%%%%%%%%%%%%%%%%%%%%%%%%%%%%%%%%%%%%%%%%%%
\section{Distribution Matching via Arithmetic Coding}\label{sec:ACDM}
%%%%%%%%%%%%%%%%%%%%%%%%%%%%%%%%%%%
Consider a composition $C = [n_0, n_1,\dotsc, n_{A-1}]$ of symbols $m \in \mathcal{A} = \{0, 1,\dotsc, A-1\}$, resp., where $\sum_{a\in\mathcal{A}} n_a = N$ and $\mathcal{A}$ the symbol alphabet.
The corresponding set $\mathcal{C}_{\text{cc}}$ of CC symbol sequences $\ux = (x_1, x_2,\dotsc, x_N)$ consists of all sequences that have $\sum_{i=1}^{N} \mathds{1}[x_i=a] = n_a$ for $a\in\mathcal{A}$. 
Here $\mathds{1}[\cdot]$ is the indicator function which is $1$ when its argument is true and $0$ otherwise.
CCDM is a shaping technique that maps $k$-bit strings (indices) $\uv = (v_1, v_2,\dotsc, v_k)$ to CC symbol sequences $\ux \in \mathcal{C}_{\text{cc}}$~\cite{SchulteB2016_CCDM}.
This is called {\it matching}, while the reverse operation is called {\it dematching}.
The matching rate is then defined as $k/N$ bit/sym.
CCDM can be realized via AC.

In AC, the binary string $\uv$ is represented by a number $d(\uv)\in [0, 1)$ which is defined as $d(\uv) \define \sum_{i=1}^k v_i2^{-i}$.
Further, each symbol sequence $\ux$ corresponds to a subinterval $I(\ux)$ of the interval $[0, 1)$.
These subintervals partition the interval $[0, 1)$.
During arithmetic encoding, the interval $I(\ux)$ is found based on $\ux$, then a number $d(\uv) \in I(\ux)$ is determined. The outcome of this process is $\uv$.
During arithmetic decoding, the input $\uv$ is mapped to sequence $\ux$ if $d(\uv) \in I(\ux)$.

The interval $I(\ux)$ can be found based on the probability model $p_n(a) \define \Pr\{X_n=a|X_1,X_2,\dotsc,X_{n-1}\}$ for $n = 1, 2,\dotsc, N$ and $a\in\mathcal{A}$.
The {\it base} $b_N$ and {\it width} $w_N$ of the interval $I(\ux) = [b_N, b_N+w_N)$ are computed via the recursions
\begin{IEEEeqnarray}{rCl}
b_n &=& b_{n-1} + w_{n-1}\sum_{a<x_n} p_n(a),  \label{eq:b_update}\\
w_{n} &=& p_n(x_n)w_{n-1}, \label{eq:w_update}
\end{IEEEeqnarray}
for $n = 1, 2,\dotsc, N$ where $b_0=0$ and $w_0=1$.
From \eqref{eq:w_update}, we see that the width $w_N = \prod_{n=1}^{N} p_n(x_n)$ which is equal to $p(\ux)$ via the chain rule.
Thus, recursions \eqref{eq:b_update}--\eqref{eq:w_update} result in intervals $I(\ux)$ of length $p(\ux)$.
Typically, $d(\uv)\in I(\ux)$ is selected as the number that leads to the $\uv$ with the shortest representation~\cite[Sec. 5.2.3.1]{Sayood2002_LosslessCompression}.
The minimum length of $\uv$ is $\lc -\log I(\ux) \rc$ bits, and AC generates variable-length $\uv$ in general.
Here, $\log$ denotes the binary logarithm.

In the case of CCDM, the order of coding operations is inverted.
At the transmitter, arithmetic decoding is used to map $\uv$ to a $\ux\in\mathcal{C}_{\text{cc}}$, while arithmetic encoding is realized for the reverse mapping at the receiver.
For CCDM, the probability model $p_n(a)$ can be obtained considering the composition $C$ of the sequence $\ux$ and the composition of already-processed symbols $(x_1, x_{2},\dotsc, x_{n-1})$.
It is given by 
\begin{equation}
    p_n(a) = \frac{n_a - \sum_{i=1}^{n-1} \mathds{1}[x_i=a]}{N-n+1}. \label{eq:ccdmprobmodel}
\end{equation}
Via \eqref{eq:ccdmprobmodel}, the symbol probabilities are initialized as $p_1(a) = n_a/N$, and recursively updated by decreasing $n_a$ whenever a symbol $x=a$ is processed.
From \eqref{eq:ccdmprobmodel}, we see that $p(\ux) = \prod_{n=1}^N p_n(x_n) = (\prod_{a=0}^{A-1} n_a!)/N!=1/|\mathcal{C}_{\text{cc}}|$ for all $\ux\in\mathcal{C}_{\text{cc}}$.
A one-to-one mapping from $\uv$ to $\ux\in\mathcal{C}_{\text{cc}}$ is established if each $I(\ux)$ contains at most one $d(\uv)$.
Since the intervals are of equal length $1/|\mathcal{C}_{\text{cc}}|$, the maximum $k$ which guarantees this is $k = \lf \log |\mathcal{C}_{\text{cc}}| \rf$, and AC becomes fixed-length coding.

%%%%%%%%%%%%%%%%%%%%%%%%%%%%%%
\begin{example}{\bf (Binary-output CCDM)}
Consider the composition $C = [3, 2]$ for $\mathcal{A} = \{0, 1\}$, i.e., we want to generate binary sequences of length $N=5$ containing $2$ ones.
There are $10$ such sequences and $k=3$.
As an example, let us find the CC sequence $\ux$ that is mapped to $\uv = (1,1,0)$ ($d(\uv)=0.75$) which is written with blue in Fig.~\ref{fig:MatchExample}.
The first symbol $x_1$ partitions the interval $[0, 1)$ in fractions $p_1(0)= n_0/N = 3/5$ and $p_1(1) = n_1/N = 2/5$.
Since $d(\uv)=0.75 \in [6/10, 1)$, this interval is selected and $x_1=1$.
Then the second symbol $x_2$ partitions the interval $[6/10, 1)$ in fractions $p_2(0)= n_0/(N-1) = 3/4$ and $p_2(1) = (n_1-1)/(N-1) = 1/4$.
Since $d(\uv)=0.75 \in [6/10, 9/10)$, this interval is selected and $x_2=0$.
This procedure is repeated until $x_5$ and the final sequence $\ux=(1,0,0,1,0)$ is obtained.
\end{example}
%%%%%%%%%%%%%%%
\begin{figure}[t]
    \centering
\resizebox{\columnwidth}{!}{\includegraphics{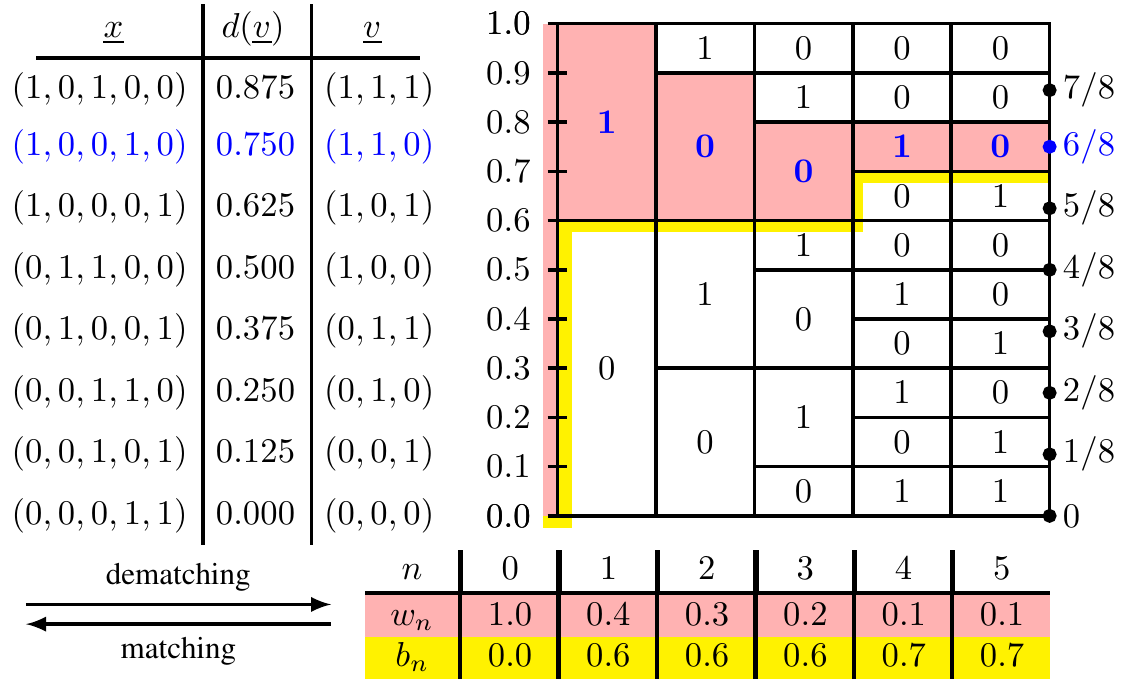}}
    \caption{CCDM for $C = [3, 2]$ and $A=2$ via AC.
    \textcolor{blue}{Blue} sequences $\ux = (1,0,0,1,0)$ and $\uv = (1,1,0)$ (with $d(\uv)=0.75)$ are mapped to each other.
    Base $b_n$ follows the yellow path, while the width $w_n$ is the height of the red region, see \eqref{eq:b_update}--\eqref{eq:w_update}.
    Sequences $(0,1,0,1,0)$ and $(1,1,0,0,0)$ are not utilized since the corresponding intervals $[0.4, 0.5)$ and $[0.9, 1.0)$ do not contain a number $d(\uv)$ for a 3-bit $\uv$.}
    \label{fig:MatchExample}
\end{figure}
%%%%%%%%%%%%%%%%%%%%%%%%%%%%%%
%%%%%%%%%%%%%%%%%%%%%%%%%%%%%%
\begin{algorithm}[t]
\KwIn{Index $\uv$, composition $C = [n_0, n_1]$ }
\KwOut{Sequence $\ux \in \{0, 1\}^{\n}$}
{\it Initialize:} $w_0 \gets 1$, $I_0\gets d(\uv)$, $N\gets n_0+n_1$ \\
\For{$n = 0, 1,\dotsc, \n-1$}{
\uIf{$I_n\geq w_nn_0/(N-n)$}{
$x_{n+1} \gets 1$ \\
{\color{red}$I_{n+1} \gets I_{n} - w_nn_0/(N-n)$}\\
{\color{blue}$w_{n+1} \gets w_{n}n_1/(N-n)$}\\
$n_1 \gets n_1-1$ \\
}
\Else{
$x_{n+1} \gets 0$ \\
$I_{n+1} \gets I_{n}$ \\
{\color{blue}$w_{n+1} \gets w_{n}n_0/(N-n)$}\\
$n_0 \gets n_0-1$ 
}
}
\KwRet{$\ux = (x_1, x_2,\dotsc, x_{\n})$}
\caption{FP-CCDM (Matching)}
\label{alg:ccdmatching}
\end{algorithm}
%%%%%%%%%%%%%%%%%%%%%%%%%%%%%%

During matching, the interval $I(\ux)$ that includes $d(\uv)$ can also be found without storing the base $b_n$, but instead, by successively subtracting it from $d(\uv)$.
The corresponding dematching algorithm can also be realized straightforwardly.
A pseudo-code for this FP-CCDM is given in Algorithm~\ref{alg:ccdmatching} for $\mathcal{A} = \{0, 1\}$.
The recursive subtraction of base from the input index is realized in line 5 (highlighted in {\color{red}red}).
This algorithm can also be realized in the log domain, which is the main idea behind Log-CCDM.
Then the width would be updated by an addition and a subtraction, instead of the multiplication and the division in lines 6 and 11 (highlighted in {\color{blue}blue}).

%%%%%%%%%%%%%%%%%%%%%%%%%%%%%%%%%%%%%%%%%%%%%%%%%%%%%%
\section{Log-CCDM with LUTs}\label{sec:LogCCDM}
%%%%%%%%%%%%%%%%%%%%%%%%%%%%%%%%%%%%%%%%%%%%%%%%%%%%%%
We will explain Log-CCDM for the binary case for simplicity.
Binary DM can be used to approximately shape the channel inputs~\cite{FabregasM2010_BICMwShaping,Pikus2017_BLDM,SteinerSB2018_PDM,PikusX2019_ACbasedMCBLDM,Gultekin2019_PESS}.
Extension to nonbinary alphabets is possible for all the techniques discussed in this section.

%%%%%%%%%%%%%%%%%%%%%%%%%%%%%%%%%%%%
\subsection{Log-CCDM}\label{ssec:logccdm}
%%%%%%%%%%%%%%%%%%%%%%%%%%%%%%%%%%%%
We define an exponential function for positive integer $s$
\begin{equation}
    F(\step) \define 
    \begin{cases}
    \lc M 2^{-\step/\stepb} \rc  &\mbox{ if }\hspace{0.05cm} 1 \leq s \leq S, \\
    \lc M 2^{-r/\stepb} \rc 2^{-d} \hfill &\mbox{ if }\hspace{0.05cm} s=r+dS>S, \\
    \end{cases}
    \label{eq:expfunc}
\end{equation}
where integers $r>0$, $d>0$.
In \eqref{eq:expfunc}, both $\stepb$ and $M$ are positive integers.
We see from \eqref{eq:expfunc} that $F(\step)$ can be computed for any positive integer $\step$ by storing $F(\step)$ only for $\step = 1, 2,\dotsc, \stepb$.
This requires a LUT with $\stepb$ entries. 
For $s>\stepb$, $F(s)$ can be computed only with shifts in base-2 thanks to the $2^{-d}$ factor in \eqref{eq:expfunc}.
We assume that $M$ is an integer power of two, and each entry of this LUT is stored with $\log M$ bits.
Note that with $\log M$ bits, only the nonnegative numbers below $M$ can be stored in an exact manner.
This, i.e., $F(s)<M$ for $1 \leq s \leq S$, is ensured when $S<M$ from \eqref{eq:expfunc}.
An example of $F(\step)$ is shown in Fig.~\ref{fig:exptable}.

Consider Algorithm~\ref{alg:ccdmatching} realized in the binary domain.
Here, $0\leq I_n<1$ and $0\leq w_n<1$  for $n = 0, 1,\dotsc, N$ and hence, their integer parts are always $0$.
Thus, we neglect their integer part and focus on their fractional part.
We approximate the fractional part of $w_n$ by $F(s_n)$ where $s_0\define 1$.
At step $n$ of Algorithm~\ref{alg:ccdmatching}, there are two possible choices for the width $w_{n+1}$ corresponding to symbols $1$ and $0$, resp., see lines 6 and 11.
We approximate these choices by first defining
\begin{equation}
    s_{n+1} \define
    \begin{cases}
        s_n-S\log(n_1/(n_0+n_1)) \quad \mbox{if} \quad x_{n+1}=1, \\
        s_n-S\log(n_0/(n_0+n_1)) \quad \mbox{if} \quad x_{n+1}=0. \\
    \end{cases}
    \label{eq:additiveupdateexact}
\end{equation}
Then we observe from \eqref{eq:expfunc} that $F(r+dS)=2^{-d}F(r)$ which implies $F(r-S\log K)= KF(r)$ for integer $S\log K$.
Then
\begin{equation}
F(s_{n+1}) =
\begin{cases}
F\left(s_n - \stepb\log\left(\frac{n_1}{n_0+n_1}\right)\right) \approx F(s_n)\frac{n_1}{n_0+n_1}, \\
F\left(s_n - \stepb\log\left(\frac{n_0}{n_0+n_1}\right)\right) \approx F(s_n)\frac{n_0}{n_0+n_1}, \\
\end{cases}
    \label{eq:subdivideF}
\end{equation}
for $x_{n+1}= 1, 0$, resp., for $n = 1, 2,\dotsc, N$.
In the context of Algorithm~\ref{alg:ccdmatching}, \eqref{eq:additiveupdateexact} corresponds to the width updates in lines 6 and 11.
Updating $w_n$ to $w_{n+1}$ using multiplications is equivalent to updating $s_n$ to $s_{n+1}$ using subtractions.

However, we observe that \eqref{eq:additiveupdateexact} requires the computation of $\log$ function with FP which is typically no less complex than a multiplication.
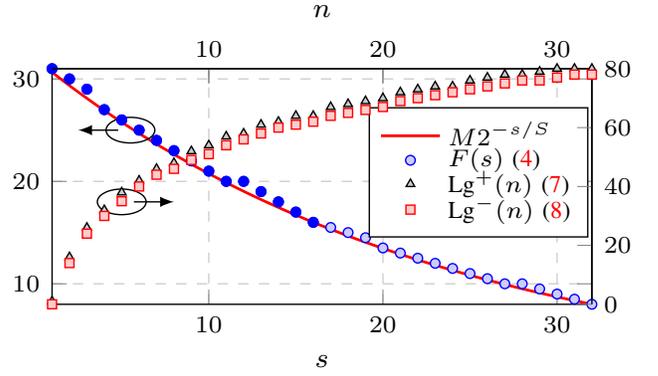
\begin{figure}[t]
\centering
\resizebox{\columnwidth}{!}{
\begin{tikzpicture}
\begin{axis}[
  every axis/.append style={font=\footnotesize},
height=0.45\columnwidth,
width=0.8\columnwidth,
xmin=1,
xmax=32,
ymin=8,
ymax=31,
grid style={dashed,lightgray!75},
xlabel={$\step$},
grid=major,
ticklabel style = {font=\scriptsize},
ylabel near ticks,
xlabel near ticks,
legend style={at={(0.79,0.56)},anchor=center,font=\scriptsize,legend cell align=left,row sep=-0.98ex},
]
\addlegendimage{color=red, thick}
\addlegendentry{$M2^{-s/S}$}
\addlegendimage{color=blue, only marks, mark=*, mark size=1.4pt,mark options={fill=blue!20!white,solid}}
\addlegendentry{$F(s)$ \eqref{eq:expfunc}}
\addlegendimage{color=black, only marks, mark=triangle*, mark size=1.45pt,mark options={fill=black!20!white,solid}}
\addlegendentry{$\lgp(n)$ \eqref{eq:lgplus}}
\addlegendimage{color=red, only marks, mark=square*, mark size=1.3pt,mark options={fill=red!20!white,solid}}
\addlegendentry{$\lgm(n)$ \eqref{eq:lgminus}}
\addplot [color=blue, only marks, mark=*, mark size=1.4pt,mark options={fill=blue!20!white,solid}]
  table[row sep=crcr]{
  1	31\\
2	30\\
3	29\\
4	27\\
5	26\\
6	25\\
7	24\\
8	23\\
9	22\\
10	21\\
11	20\\
12	20\\
13	19\\
14	18\\
15	17\\
16	16\\
17	15.5000000000000\\
18	15\\
19	14.5000000000000\\
20	13.5000000000000\\
21	13\\
22	12.5000000000000\\
23	12\\
24	11.5000000000000\\
25	11\\
26	10.5000000000000\\
27	10\\
28	10\\
29	9.50000000000000\\
30	9\\
31	8.50000000000000\\
32	8\\
};
\addplot [color=red, thick, domain=-10:40, samples=100] {32*2^(-x/16)};
\addplot [color=blue, only marks, mark=*, mark size=1.4pt,mark options={fill=blue,solid}]
  table[row sep=crcr]{
  1	31\\
2	30\\
3	29\\
4	27\\
5	26\\
6	25\\
7	24\\
8	23\\
9	22\\
10	21\\
11	20\\
12	20\\
13	19\\
14	18\\
15	17\\
16	16\\
};
\draw (5.5,25) ellipse (1.4 and 1.25) node[anchor=center] (aa) {};
\draw[draw,-latex] (aa) -- (2.5,25);
\draw (5,18) ellipse (1.4 and 1.25) node[anchor=center] (bb) {};
\draw[draw,-latex] (bb) -- (8,18);
\end{axis}
\begin{axis}[
xmin=0,
xmax=1,
every axis/.append style={font=\footnotesize},
height=0.45\columnwidth,
width=0.8\columnwidth,
xmin=1,
xmax=32,
ymin=0,
ymax=80,
ylabel near ticks,
xlabel={$n$},
axis y line*=right,
axis x line*=top,
ticklabel style = {font=\scriptsize},
legend style={legend pos=north east,font=\scriptsize,legend cell align=left,row sep=-0.5ex},
]
\addplot [color=black, only marks, mark=triangle*, mark size=1.45pt,mark options={fill=black!20!white,solid}]
  table[row sep=crcr]{
1	1\\
2	16\\
3	26\\
4	32\\
5	38\\
6	42\\
7	46\\
8	48\\
9	52\\
10	54\\
11	57\\
12	58\\
13	61\\
14	62\\
15	64\\
16	64\\
17	67\\
18	68\\
19	69\\
20	70\\
21	72\\
22	73\\
23	74\\
24	74\\
25	76\\
26	77\\
27	78\\
28	78\\
29	79\\
30	80\\
31	80\\
32	80\\
};
\addplot [color=red, only marks, mark=square*, mark size=1.3pt,mark options={fill=red!20!white,solid}]
  table[row sep=crcr]{
1	0\\
2	14\\
3	24\\
4	30\\
5	35\\
6	40\\
7	44\\
8	46\\
9	49\\
10	51\\
11	54\\
12	56\\
13	58\\
14	60\\
15	61\\
16	62\\
17	64\\
18	65\\
19	66\\
20	67\\
21	69\\
22	70\\
23	71\\
24	72\\
25	73\\
26	74\\
27	75\\
28	76\\
29	76\\
30	77\\
31	78\\
32	78\\
};
\end{axis}
\end{tikzpicture}
}
\caption{Exponential function $F(s)$ for $1 \leq s \leq 2S$, and log functions $\lgp$ and $\lgm$ for $1\leq n \leq N$ where $\stepb=16$, $M=32$, and $N=32$.
$F(s)$ can be computed for any positive integer $s$ by storing $F(s)$ only for $1\leq s\leq S$ (filled circles) in a LUT.
$\lgp$ and $\lgm$ can be stored in two LUTs.}
\label{fig:exptable}
\end{figure}
To circumvent this complexity, we define two logarithmic functions (s.t. means such that)
\begin{IEEEeqnarray}{rCl}
\lgp(n) &\define& \max_{s=1, 2,\dotsc, S} \min_{\substack{\Delta = 0, 1,\dotsc \\ F(s)\geq nF(s+\Delta)}} \{\Delta\}, \label{eq:lgplus} \\
\lgm(n) &\define& \min_{s=1, 2,\dotsc, S} \max_{\substack{\Delta = 0, 1,\dotsc \\ F(s)\leq nF(s+\Delta)}} \{\Delta\} , \label{eq:lgminus}
\end{IEEEeqnarray}
for $n = 1, 2,\dotsc, \n$.
This computation can be realized offline. 
The resulting functions can be stored in two LUTs, each having $N$ entries.
Examples of $\lgp(n)$ and $\lgm(n)$ are shown in Fig.~\ref{fig:exptable}.

We approximate multiplications and divisions involving $F(s)$ and an integer $n$ by
\begin{IEEEeqnarray}{rCl}
    nF(\step) &\approx& F\left(\step-\lgp(n)\right), \label{eq:additiveupdate1}\\
    \frac{F(\step)}{n} &\approx& F\left(\step+\lgm(n)\right), \label{eq:additiveupdate2}
\end{IEEEeqnarray}
resp.
The relations in \eqref{eq:additiveupdate1}--\eqref{eq:additiveupdate2} are approximate due to the way $\lgp$ and $\lgm$ are defined in \eqref{eq:lgplus}--\eqref{eq:lgminus}, and due to the ceiling operation in \eqref{eq:expfunc}.
Finally, combining \eqref{eq:subdivideF} and \eqref{eq:additiveupdate1}--\eqref{eq:additiveupdate2},
\begin{equation}
F(s_{n+1}) =
\begin{cases}
F\left(s_n-\lgp(n_1)+\lgm(n_0+n_1)\right),\\
F\left(s_n-\lgp(n_0)+\lgm(n_0+n_1)\right),
\end{cases}
    \label{eq:logupdate}
\end{equation}
for symbols 1 and 0, resp.
This way, the FP-CCDM in Algorithm~\ref{alg:ccdmatching} is converted into the Log-CCDM in Algorithm~\ref{alg:ccdmatchinglogtable} by:
(1) representing the subinterval width $w_{n+1}$ by $F(s_{n+1})$, and (2) keeping track of this width through $s_{n+1}$ which is updated via \eqref{eq:logupdate} in lines 6 and 11 (highlighted in \textcolor{blue}{blue}).
The rest of the algorithms are identical.

%%%%%%%%%%%%%%%%%%%%%%%%%%%%%%
\begin{algorithm}[t]
\KwIn{ Index $\uv$, composition $C = [n_0, n_1]$ }
\KwOut{ Sequence $\ux \in \{0, 1\}^\n$}
{\it Initialize:} $s_0 \gets 1$, $I_0\gets d(\uv)$, $N\gets n_0+n_1$ \\
\For{$n = 0, 1,\dotsc, \n-1$}{
\uIf{$I_n \geq  F\left(s_n - \text{Lg}^{+}(n_0) + \lgm (\n-n) \right)$}{
$x_{n+1} \gets 1$ \\
{\color{red}$I_{n+1} \gets I_n - F\left(s_n - \text{Lg}^{+}(n_0) + \lgm (\n-n) \right)$} \\
{\color{blue}$s_{n+1} \gets s_n - \text{Lg}^{+}(n_1) + \lgm (\n-n)$} \\
$n_1 \gets n_1-1$ \\
}
\Else{
$x_{n+1} \gets 0$ \\
$I_{n+1} \gets I_{n}$ \\
{\color{blue}$s_{n+1} \gets s_n - \lgp(n_0) + \lgm (\n-n)$}\\
$n_0 \gets n_0-1$ \\
}
}
\KwRet{$\ux = (x_1, x_2,\dotsc, x_{\n})$}
\caption{Log-CCDM (Matching)}
\label{alg:ccdmatchinglogtable}
\end{algorithm}
%%%%%%%%%%%%%%%%%%%%%%%%%%%%%%

%%%%%%%%%%%%%%%%%%%%%%%%%%%%%%%%%%%%
\subsection{Representability of Data Sequences}\label{ssec:represent}
%%%%%%%%%%%%%%%%%%%%%%%%%%%%%%%%%%%%
When AC is used for compression, if the arithmetic operations are implemented with finite-precision (or in any approximate way), the {\it decodability} of the code becomes questionable.
Decodability is ensured only when the subdivisions {\it do not create overlapping subintervals}, which would make two different source sequences (the red circles in Fig.~\ref{fig:OverlapEffect}) mapped to the same code sequence (the blue circle).
This makes uniquely decodability during decompression impossible~\cite{Rissanen1979AC}.
On the other hand, gaps between subintervals only decrease the efficiency of the source code, i.e., increases its rate.

When AC is used for DM which is a dual of source coding, see Sec.~\ref{sec:intro}, the dual of decodability is {\it representability}~\cite{Martin1983ACforconstrainedchan}.
AC creates an invertible mapping from data sequences to CC sequences only if the {\it subintervals do not have gaps} in between.
Overlaps are allowed.
As discussed above via Fig.~\ref{fig:OverlapEffect}, overlaps make two amplitude sequences mapped to the same data sequence. 
However, since matching is realized first in the case of DM (at the transmitter), this does not create a problem: some channel sequences are just never generated.
This only decreases the efficiency of the code, i.e., decreases its rate.

With an FP implementation, AC creates subintervals that neither overlap nor have gaps, satisfying the decodability and representability criteria as in Fig.~\ref{fig:MatchExample}.
For our Log-CCDM algorithm, the representability, i.e., ``no gaps'', criterion can be expressed from \eqref{eq:logupdate} as~\cite[eq. (14)]{Martin1983ACforconstrainedchan}
\begin{IEEEeqnarray}{rCl}
F(s_{n}) &\leq& F\left(s_n-\lgp(n_1)+\lgm(n_0+n_1)\right) \nonumber \\ && + \hspace{0.1cm} F\left(s_n-\lgp(n_0)+\lgm(n_0+n_1)\right). \label{eq:overlapcondition}
\end{IEEEeqnarray}
%%%%%%%========
\begin{lemma}
The condition in \eqref{eq:overlapcondition} is satisfied for $F(s)$, $\lgp(n)$, and $\lgm$ defined in \eqref{eq:expfunc}, \eqref{eq:lgplus}, and \eqref{eq:lgminus}, resp.
\end{lemma}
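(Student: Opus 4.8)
The plan is to reduce the representability inequality \eqref{eq:overlapcondition} to two one-sided bounds read off directly from the definitions \eqref{eq:lgplus}--\eqref{eq:lgminus}, and then chain them together. Abbreviating $s=s_n$ and $L=\lgm(n_0+n_1)$, the target is
\[
F(s)\;\le\;F\!\left(s-\lgp(n_1)+L\right)+F\!\left(s-\lgp(n_0)+L\right).
\]
First I would record the two structural facts about $F$ that the argument rests on: that $F$ is non-increasing in $\step$, and that it obeys the scaling law $F(r+d\stepb)=2^{-d}F(r)$ observed from \eqref{eq:expfunc}. The scaling law is what lets me promote any inequality proved for $1\le\step\le\stepb$ to \emph{all} positive integer arguments, since a common factor $2^{-d}$ multiplies both sides of a shifted inequality and therefore preserves it.

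The key step is to extract clean inequalities from the nested extrema. From \eqref{eq:lgplus} I would derive
\[
n\,F(\step+\lgp(n))\;\le\;F(\step),\qquad\text{equivalently}\qquad F(u-\lgp(n))\;\ge\;n\,F(u),
\]
valid for every positive integer argument. This holds because $\lgp(n)=\max_{\step}\Delta$ dominates, for each fixed $\step$, the minimal shift for which $F(\step)\ge nF(\step+\Delta)$; monotonicity of $F$ makes the inequality persist at the larger shift $\lgp(n)$, and the scaling law extends it past $[1,\stepb]$. Symmetrically, \eqref{eq:lgminus} gives
\[
n\,F(\step+\lgm(n))\;\ge\;F(\step),
\]
since $\lgm(n)=\min_{\step}\Delta$ is dominated by the maximal admissible shift for each $\step$, so $F(\step+\lgm(n))$ cannot fall below $F(\step)/n$. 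These are exactly the lower-bound forms of the approximations \eqref{eq:additiveupdate1}--\eqref{eq:additiveupdate2}.

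With both bounds in hand the proof is a short chain. Applying the $\lgp$-bound at the arguments $s+L-\lgp(n_1)$ and $s+L-\lgp(n_0)$ yields
\[
F\!\left(s-\lgp(n_1)+L\right)\ge n_1\,F(s+L),\qquad F\!\left(s-\lgp(n_0)+L\right)\ge n_0\,F(s+L),
\]
so the right-hand side of \eqref{eq:overlapcondition} is at least $(n_0+n_1)F(s+L)$. Finally the $\lgm$-bound taken at $n=n_0+n_1$ gives $(n_0+n_1)F(s+L)=(n_0+n_1)F(s+\lgm(n_0+n_1))\ge F(s)$, closing the chain and establishing the lemma.

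The step I expect to be most delicate is the middle one: correctly reading off the \emph{direction} of each inequality from the $\max$--$\min$ (resp.\ $\min$--$\max$) nesting in \eqref{eq:lgplus}--\eqref{eq:lgminus}, and verifying that the ceiling in \eqref{eq:expfunc} does not break the monotonicity of $F$ at block boundaries $\step=d\stepb$. Everything else is bookkeeping; the only side condition is that each shifted argument such as $s-\lgp(n_i)+L$ remains a positive integer, which is automatic because it is precisely the next width index $s_{n+1}$ produced in lines~6 and~11 of Algorithm~\ref{alg:ccdmatchinglogtable}, and the algorithm maintains $s_n\ge1$ throughout.
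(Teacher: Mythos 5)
Your proof is correct and takes essentially the same route as the paper: you derive exactly the paper's key inequalities \eqref{eq:pluslogconsequence}--\eqref{eq:minuslogconsequence} from the definitions \eqref{eq:lgplus}--\eqref{eq:lgminus} and chain them, the only (trivial) rearrangement being that the paper applies the $\lgm$-bound per branch to obtain \eqref{eq:decode1}--\eqref{eq:decode2} and then sums, whereas you sum the two $\lgp$-bounds first and apply the $\lgm$-bound once. Your added justification of those inequalities via the monotonicity of $F$ and the scaling law $F(r+d\stepb)=2^{-d}F(r)$ makes explicit what the paper asserts ``by definition,'' but does not change the argument.
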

%%%%%%%========
\begin{proof}
First, observe that the following inequalities are satisfied from \eqref{eq:lgplus}--\eqref{eq:lgminus} by definition for any positive integer $s$:
\begin{IEEEeqnarray}{rCl}
F(\step-\lgp(n)) &\geq&  nF(\step), \label{eq:pluslogconsequence}\\
F(\step+\lgm(n)) &\geq& \frac{F(\step)}{n}. \label{eq:minuslogconsequence}
\end{IEEEeqnarray}
Then for the first case in \eqref{eq:logupdate}, we can write
\begin{IEEEeqnarray}{rCl}
F(s_{n+1}) &=& F\left(s_n-\lgp(n_1)+\lgm(n_0+n_1)\right) \nonumber \\ 
&\stackrel{\eqref{eq:pluslogconsequence}}{\geq}& n_1 F\left(s_n+\lgm(n_0+n_1)\right) \nonumber \\
  &\stackrel{\eqref{eq:minuslogconsequence}}{\geq}& \frac{n_1 F\left(s_n\right)}{n_0+n_1}. \label{eq:decode1}
\end{IEEEeqnarray}
Similarly, for the second case in \eqref{eq:logupdate}, we can write
\begin{equation}
 F(s_{n+1}) \geq \frac{n_0F\left(s_n\right)}{n_0+n_1}. \label{eq:decode2}
\end{equation}
Combined, \eqref{eq:decode1} and \eqref{eq:decode2} imply that \eqref{eq:overlapcondition} is satisfied.
\end{proof}
%%%%%%%========

\section{Properties of Log-CCDM}\label{sec:logccdmprop}
%%%%%%%%%%%%%%%%%%%%%%%%%%%%%%%%%%%%
\subsection{Input Length of the Matcher}
%%%%%%%%%%%%%%%%%%%%%%%%%%%%%%%%%%%%
The final width for any input in Algorithm~\ref{alg:ccdmatchinglogtable} is $F(s_N)$ where
\begin{equation}
    s_N = s_0 + \underbrace{\sum_{i=1}^{n_0+n_1} \lgm(i) - \sum_{j=1}^{n_0} \lgp(j) - \sum_{t=1}^{n_1} \lgp(t)}_{\define\gamma}.
\end{equation}
Thus, all CC sequences are represented with an interval of identical width.
This allows us to represent each interval with a fixed length $k$-bit index as discussed in Sec.~\ref{sec:ACDM}.

There are $F(s_0)/F(s_N)$ CC sequences, i.e., $|\mathcal{C}_{\text{cc}}| = F(s_0)/F(s_N)$.
Then as in Sec.~\ref{sec:ACDM},
\begin{IEEEeqnarray}{rCl}
k &=& \lf \log |\mathcal{C}_{\text{cc}}| \rf = \lf \log \frac{F(s_0)}{F(s_N)} \rf = \lf \log \frac{F(s_0)}{F(s_0+\gamma)} \rf \nonumber\\
  &\stackrel{\eqref{eq:expfunc}}{=}& \lf \log \frac{F(s_0)}{2^{-\lf\gamma/S\rf}F(s_0)} \rf = \lf \frac{\gamma}{S} \rf.
\end{IEEEeqnarray}
The parameters $\gamma$ and $k$ can be computed offline.

%%%%%%%%%%%%%%%%%%%%%%%%%%%%%%%%%%%%
\subsection{Storage Complexity and Arithmetic Precision}\label{ssec:storagearithmetic}
%%%%%%%%%%%%%%%%%%%%%%%%%%%%%%%%%%%%
The storage complexity of Log-CCDM is $S\log M + 2N(\log S + \log\log N)$~bits.
To store $F(s)$ in \eqref{eq:expfunc}, we need a LUT with $S$ entries, each stored with $\log M$ bits as discussed in Sec.~\ref{ssec:logccdm}.
Thus, the storage requirement of this LUT is $S\log M$~bits.
From \eqref{eq:expfunc} and \eqref{eq:additiveupdate1}--\eqref{eq:additiveupdate2}, it can be shown that both $\lgp(n)$ and $\lgm(n)$ are approximately equal to $S\log n$.
Therefore, the entries in the corresponding LUTs can be stored with approximately $\log S + \log\log N$ bits assuming $S$ and $N$ are integer powers of two.
Thus, the total storage requirement of these two LUTs is $2N(\log S + \log\log N)$ bits.

In line 5 of Algorithm~\ref{alg:ccdmatchinglogtable} (highlighted in \textcolor{red}{red}), values of $F(\cdot)$ are recursively subtracted from the input index.
These subtractions require an arithmetic precision of $\log M + \log N $ bits assuming again $N$ is an integer power of two. 
The proof of this will consist of a lemma and a theorem.
%%%%%%%========
\begin{lemma}\label{lem2}
If $I_n < F(s_n)$, then Algorithm~\ref{alg:ccdmatchinglogtable} guarantees that $I_{n+1} < F(s_{n+1})$.
This implies that if $I_0 < F(s_0)$, then all $I_n$ for $n=0, 1,\dotsc, N$ satisfy $I_n<F(s_n)$.
\end{lemma}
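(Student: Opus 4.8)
The plan is to prove the single-step implication by splitting on the symbol $x_{n+1}$ chosen at iteration $n$ of Algorithm~\ref{alg:ccdmatchinglogtable}, and then to iterate it. A preliminary observation makes everything line up: since exactly one of $n_0,n_1$ is decremented per step, the remaining counts satisfy $n_0+n_1=\n-n$ throughout, so the term $\lgm(\n-n)$ appearing in lines 3, 5, 6, and 11 is exactly the $\lgm(n_0+n_1)$ of Lemma~1 and of \eqref{eq:overlapcondition}. This identification is what lets me invoke the overlap inequality verbatim.

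First I would dispose of the case $x_{n+1}=0$ (lines 9--12). There $I_{n+1}=I_n$ and $s_{n+1}=s_n-\lgp(n_0)+\lgm(\n-n)$, so $F(s_{n+1})=F\bigl(s_n-\lgp(n_0)+\lgm(\n-n)\bigr)$. But the \textbf{else} branch is entered precisely when the test in line 3 fails, i.e.\ when $I_n<F\bigl(s_n-\lgp(n_0)+\lgm(\n-n)\bigr)$; hence $I_{n+1}=I_n<F(s_{n+1})$ immediately, and the hypothesis $I_n<F(s_n)$ is not even needed in this branch.

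The case $x_{n+1}=1$ (lines 4--7) is where Lemma~1 carries the load. Now $I_{n+1}=I_n-F\bigl(s_n-\lgp(n_0)+\lgm(\n-n)\bigr)$ while $F(s_{n+1})=F\bigl(s_n-\lgp(n_1)+\lgm(\n-n)\bigr)$. Chaining the hypothesis $I_n<F(s_n)$ with the overlap condition \eqref{eq:overlapcondition}, already established in Lemma~1, yields
\[
I_{n+1}<F(s_n)-F\bigl(s_n-\lgp(n_0)+\lgm(\n-n)\bigr)\le F\bigl(s_n-\lgp(n_1)+\lgm(\n-n)\bigr)=F(s_{n+1}),
\]
which is the desired strict inequality. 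The ``this implies'' clause then follows by induction on $n$: the base case is the standing assumption $I_0<F(s_0)$, and the inductive step is exactly the single-step implication just shown, so $I_n<F(s_n)$ holds for every $n=0,1,\dotsc,\n$.

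I do not expect a genuine obstacle, since Lemma~1 supplies the only nontrivial estimate and the $x_{n+1}=0$ case is immediate from the branching test. The one point that must be handled with care is the index bookkeeping: recognizing that the width subtracted in line 5 is that of the ``$0$''-subinterval, $F(s_n-\lgp(n_0)+\lgm(\n-n))$, whereas the retained width $F(s_{n+1})$ is that of the ``$1$''-subinterval, so that \eqref{eq:overlapcondition} reads precisely as ``parent width $\le$ sum of the two child widths'' --- exactly the inequality needed to absorb the subtraction in the $x_{n+1}=1$ case.
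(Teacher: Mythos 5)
Your proof is correct and follows essentially the same route as the paper's: a case split on the branch taken at step $n$, with the $x_{n+1}=1$ case absorbed by chaining $I_n<F(s_n)$ through the overlap condition \eqref{eq:overlapcondition} of Lemma~1, the $x_{n+1}=0$ case immediate from the failed test in line~3, and an induction to propagate the invariant (the paper phrases the case split as $I_n\geq F(s_{n,0})$ versus $I_n<F(s_{n,0})$, which coincides with your branch split). Your explicit observation that $n_0+n_1=\n-n$ throughout, so that $\lgm(\n-n)$ in the algorithm matches $\lgm(n_0+n_1)$ in \eqref{eq:overlapcondition}, is a bookkeeping point the paper leaves implicit, and is a welcome clarification.
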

%%%%%%%========
\begin{proof}
We define $s_{n,0} \define s_n - \lgp(n_0) + \lgm(N-n)$ and $s_{n,1} \define s_n - \lgp(n_1) + \lgm(N-n)$ for $n = 0, 1,\dotsc, N-1$.
Note that $s_{n,0}$ and $s_{n,1}$ are the candidates for $s_{n+1}$ in lines 11 and 6 of Algorithm~\ref{alg:ccdmatchinglogtable}.
We first observe that
\begin{equation}
    I_n < F(s_n) \stackrel{\eqref{eq:overlapcondition}}{\leq} F(s_{n,0}) + F(s_{n,1}). \label{eq:lemma1}
\end{equation}
Then there are two options at the $n^{\text{th}}$ step of Algorithm~\ref{alg:ccdmatchinglogtable}:
\begin{itemize}
    \item 
    If $I_n \geq F(s_{n,0})$ (line 3), then $s_{n+1}=s_{n,1}$ and
    \begin{IEEEeqnarray}{rCl}
    I_{n+1} &=& I_n - F(s_{n,0}) \nonumber \\
    &\stackrel{\eqref{eq:lemma1}}{<}& F(s_{n,0}) + F(s_{n,1}) - F(s_{n,0}) = F(s_{n+1}). \nonumber
    \end{IEEEeqnarray}
    
    \item 
    If $I_n < F(s_{n,0})$ (line 8), then $s_{n+1}=s_{n,0}$ and
     \begin{IEEEeqnarray}{rCl}
     I_{n+1} = I_n < F(s_{n,0}) = F(s_{n+1}). \nonumber 
     \end{IEEEeqnarray}
\end{itemize}
Since $I_0 < 1 \leq F(s_0=1)$ due to \eqref{eq:expfunc} and Algorithm~\ref{alg:ccdmatchinglogtable}, we conclude that $I_n < F(s_n)$ for $n = 0, 1,\dotsc, N$.
We note that this proof is similar to the proof of~\cite[Lemma 1]{GultekinWHS2018_ApproxEnumerative}.
\end{proof}
%%%%%%%========
\begin{theorem}
The subtraction $I_n-F(s_{n,0})$ in line 5 of Algorithm~\ref{alg:ccdmatchinglogtable} requires an arithmetic precision of at most $\log M + \log N$ bits assuming $N$ is an integer power of two.
\end{theorem}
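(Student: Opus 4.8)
The plan is to pin down the binary positions of the nonzero bits of the two operands $I_n$ and $F(s_{n,0})$ of the subtraction and to show that they all fit inside a window of width $\log M + \log N$. The starting point is a normalization hidden in \eqref{eq:expfunc}: writing any argument as $s = r + dS$ with $1 \le r \le S$ and $d \ge 0$, we have $F(s) = F(r)2^{-d}$ with $F(r) = \lc M 2^{-r/S} \rc$. Since $2^{-r/S} \in [1/2,\, 2^{-1/S})$ over this range of $r$, the integer $F(r)$ obeys $M/2 \le F(r) < M$; that is, $F(r)$ is a $\log M$-bit integer whose leading bit is always set (e.g. $F(r) \in \{16,\dots,31\}$ in Fig.~\ref{fig:exptable}). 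Hence every $F(s)$ has its most significant bit exactly at position $\log M - 1 - d$ and, in the worst case that $F(r)$ is odd, its least significant bit at position $-d$, occupying a window of precisely $\log M$ bit positions.

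First I would locate the two operands. Denote by $d_n$ and $d_{n,0}$ the shifts obtained by writing $s_n$ and $s_{n,0}$ in the form $r+dS$ with $1 \le r \le S$. By Lemma~\ref{lem2}, $I_n < F(s_n) < M 2^{-d_n} = 2^{\log M - d_n}$, so the most significant bit of $I_n$ lies no higher than position $\log M - 1 - d_n$. The subtrahend $F(s_{n,0}) = F(r_{n,0})2^{-d_{n,0}}$ has its least significant bit at position $-d_{n,0}$. The crucial structural observation is that line~5 is executed only when $I_n \ge F(s_{n,0})$ (the test in line~3); since $F(s_{n,0})$ has no bits below position $-d_{n,0}$, no borrow can propagate below $-d_{n,0}$, and every bit of $I_n$ lower than $-d_{n,0}$ passes through the subtraction untouched. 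The arithmetic is therefore confined to the window $[-d_{n,0},\, \log M - 1 - d_n]$, whose width is $\log M + (d_{n,0} - d_n)$.

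It then remains to prove $d_{n,0} - d_n \le \log N$, which I expect to be the crux. Applying \eqref{eq:minuslogconsequence} with argument $N-n$ and then \eqref{eq:pluslogconsequence} with argument $n_0$ gives $F(s_{n,0}) = F(s_n + \lgm(N-n) - \lgp(n_0)) \ge n_0\, F(s_n + \lgm(N-n)) \ge \tfrac{n_0}{N-n} F(s_n)$, so that $F(s_n)/F(s_{n,0}) \le (N-n)/n_0 \le N$, using $n_0 \ge 1$ and $N-n \le N$ (if $n_0=0$ no further zeros are placed and the subtraction is trivial). On the other hand, the normalization above yields $F(s_n)/F(s_{n,0}) = \big(F(r_n)/F(r_{n,0})\big)\,2^{d_{n,0}-d_n}$ with $F(r_n)/F(r_{n,0}) \in (1/2,\,2)$, whence $2^{\,d_{n,0}-d_n} < 2\,F(s_n)/F(s_{n,0}) \le 2N$. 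Because $d_{n,0}-d_n$ is an integer and $N$ is a power of two, this forces $d_{n,0}-d_n \le \log N$, so the window width is at most $\log M + \log N$, as claimed.

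The main obstacle is the bit-position bookkeeping rather than any single inequality: one must argue convincingly that the possibly many low-order bits of $I_n$ inherited from $d(\uv)$ and from earlier subtractions never participate in \emph{this} subtraction, so that the relevant precision is the width of the active window and not the full dynamic range of $I_n$. The other delicate point is that the clean integer bound $d_{n,0}-d_n \le \log N$ relies entirely on the normalization $M/2 \le F(r) < M$; without it the factor $F(r_{n,0})/F(r_n)$ could be as large as $M$, and the estimate would degrade to $d_{n,0}-d_n \le \log M + \log N$, inflating the window to $2\log M + \log N$. I would therefore foreground the normalization as the enabling lemma before carrying out the window argument.
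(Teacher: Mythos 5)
Your proposal is correct and takes essentially the same approach as the paper: the load-bearing step---$F(s_{n,0}) \geq \frac{n_0}{N-n}F(s_n) \geq \frac{1}{N}F(s_n)$ via \eqref{eq:pluslogconsequence}--\eqref{eq:minuslogconsequence}, combined with $I_n < F(s_n)$ from Lemma~\ref{lem2} and the fact that $F(\cdot)$ is a $\log M$-bit mantissa times a power of two---is exactly the paper's argument, which compresses all of your bookkeeping into the single sentence that the subtrahend is at most $N$ times smaller than the minuend. Your explicit bit-window accounting (borrows not propagating below the subtrahend's LSB, low-order bits of $I_n$ passing through untouched) and the normalization $M/2 \leq F(r) < M$ are a careful rigorization of what the paper leaves implicit; note only that the normalization is not strictly needed if one anchors the window at the subtrahend's own MSB position $p$ (then $I_n < N F(s_{n,0}) < N2^{p+1}$ bounds the minuend's MSB by $p + \log N$, while the subtrahend spans at most $\log M$ positions below $p$), so your route via the shift difference $d_{n,0}-d_n$ is one valid elaboration rather than the only one.
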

%%%%%%%========
\begin{proof}
First, observe from \eqref{eq:decode1} that
\begin{equation}
   \frac{F(s_{n,0})}{F(s_n)} \geq \frac{n_0}{n_0+n_1} \geq \frac{1}{\n}.
\end{equation}
Thus, since $I_n < F(s_n)$ from Lemma~\ref{lem2}, the subtrahend of the subtraction $I_n - F(s_{n,0})$ in line 5 of Algorithm~\ref{alg:ccdmatchinglogtable} can be at most $N$ times smaller than its minuend.
Since each entry of the LUT of $F(\cdot)$ is $\log M$-bit long, this subtraction requires an arithmetic precision of at most $\log M + \log N$~bits.
\end{proof}
%%%%%%%========

%========
\begin{remark}
Log-CCDM can be implemented with two $(\log M + \log N)$-bit shift registers.
The $k$-bit input index (and some trailing zeros if necessary) is gradually loaded into the first register, which stores the minuend.
The $\log M$-bit $F(\cdot)$ values are loaded into the second register, which stores the subtrahend.
The dematching can be implemented with the same principle via $(\log M + \log N )$-bit additions.
\end{remark}
%========
%========
\begin{remark}
It is not straightforward to evaluate the complexity of FP-CCDM~\cite[Sec. 6]{Gultekin2019Arxiv_Comparison}.
In this work, we have transformed FP-CCDM, which requires no storage but is based on multiplications and divisions, into Log-CCDM, which requires a small amount of storage but is based on additions and subtractions.
Then the selection among these two depends on how costly the multiplications and divisions are in comparison to a certain amount of storage for given hardware.
\end{remark}
%========

%%%%%%%%%%%%%%%%%%%%%%%%%%%%%%%%%%%%
\subsection{Rate Loss}\label{ssec:rateloss}
%%%%%%%%%%%%%%%%%%%%%%%%%%%%%%%%%%%%
There are two sources of inaccuracies in Log-CCDM.
The first is due to \eqref{eq:expfunc} which leads to an imprecise representation of the log of the interval width.
The second is due to \eqref{eq:lgplus} and \eqref{eq:lgminus}, and then to \eqref{eq:pluslogconsequence} and \eqref{eq:minuslogconsequence}  which lead to imprecise multiplications (with $n_0$ or $n_1$) and divisions (with $n_0+n_1$).
These two inaccuracies lead to overlapping intervals as discussed in Sec.~\ref{ssec:represent}.
Thus, some CC sequences that would be generated by FP-CCDM are never produced by Log-CCDM, which decreases the rate $k/\n$, causing a rate loss.

In Fig.~\ref{fig:ratelosses}, we show $k/\n$ as a function of $\n$ for a composition of $C=[0.75\n, 0.25\n]$ for FP-CCDM and Log-CCDM with different $(S, M)$ pairs.
First, we see that FP-CCDM is asymptotically optimum for large $\n$ as shown in~\cite{SchulteB2016_CCDM}.
This is in the sense that it has a matching rate $\define k_{\max}/\n$ converging to the (binary) entropy $H(0.75)=0.8113$ bits of the resulting distribution, see the black curve in Fig.~\ref{fig:ratelosses}.
However, Log-CCDM is suboptimal, i.e., it does not converge to the entropy.
By increasing $S$ and $M$, this rate can be made closer to the entropy, while the required storage increases, see Sec.~\ref{ssec:storagearithmetic}.

Next, we see that for increasing $S$ and $M$, the matching rate of Log-CCDM gets closer to $k_{\max}/\n$.
When using LUTs with $S=512$ and $N\leq 1024$ entries (the red circles), the gap to $k_{\max}/\n$ is around 0.007 bit/sym.
The gap increases to 0.03 bit/sym for LUTs with $S=128$ and $N\leq 1024$ entries (the blue curve with squares).
Thus, there is a trade-off between the rate loss with respect to $k_{\max}/N$ and the table sizes.
Assume we operate at $N=1024$, $S=512$, $M=1024$, and $k/\n=0.7988<0.8063=k_{\max}/\n$ bit/sym (the filled red circle).
Then the required storage for the three LUTs is $S\log M + 2N(\log S + \log\log N) = 3.79$~ kBs (bottom-right inset figure), and the required arithmetic precision is $\log M + \log N=20$ bits (top-left inset figure), see Sec.~\ref{ssec:storagearithmetic}.

\begin{figure}[t]
\centering
	\resizebox{0.9\columnwidth}{!}{\includegraphics{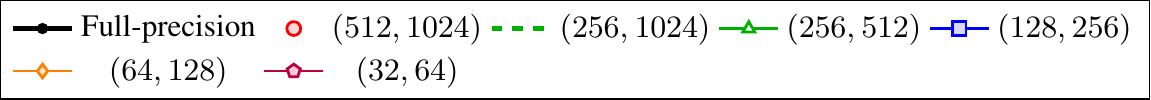}}
	\resizebox{0.9\columnwidth}{!}{\includegraphics{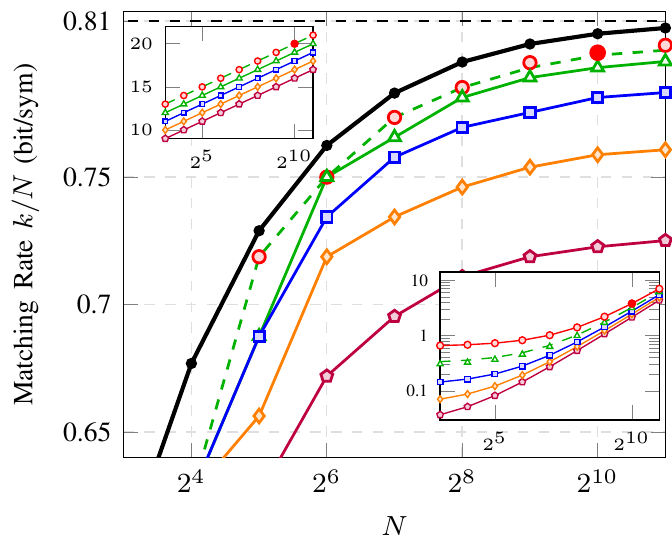}}
\caption{Rate $k/N$ of Log-CCDM with $(S, M)$. {\bf Top-left inset:} Required arithmetic precision in bits. {\bf Bottom-right inset:} Required storage in kBs.}
\label{fig:ratelosses}
\end{figure}

Last, we see that for $S=256$, the required storage is virtually identical with $M=512$ and $1024$ (bottom-right inset figure). 
However, the rate loss of $(S=256, M=1024)$ (dashed green) is smaller.
For instance, the same rate loss as $(S=512, M=1024, N=1024)$ (filled red circle) is obtained via $(S=256, M=1024, N=1024)$ (dashed green), requiring $3.22$ kBs of storage instead of $3.79$.
This demonstrates that $M$ plays a more important role in determining the rate loss.
For a given $M$, a smaller $S$ can be chosen to decrease the storage complexity, since the rate loss is relatively insensitive to the changes in $S$.

%%%%%%%%%%%%%%%%%%%%%%%%%%%%%%%%%%%%%%%%%%%%%%%%%%%%%%
\section{Conclusion}\label{sec:conclusions}
%%%%%%%%%%%%%%%%%%%%%%%%%%%%%%%%%%%%%%%%%%%%%%%%%%%%%%
In this paper, we have introduced Log-CCDM: an approximate algorithm to realize arithmetic-coding-based constant composition distribution matching. Log-CCDM operates in the logarithmic domain and is based on three simple lookup tables (LUTs). 
Thanks to these LUTs, it is possible to realize CCDM (1) with an arithmetic precision that grows logarithmically with input length (instead of linearly), and (2) only using additions and subtractions (instead of multiplications and divisions).
This decreases the computational complexity, however, increases the storage complexity (to a few kilobytes) due to the LUTs.
The performance of Log-CCDM in terms of rate depends on the sizes of these LUTs.

\clearpage
{ {\bf Acknowledgements:} The work of Yunus Can G\"{u}ltekin and Alex Alvarado has received funding from the European Research Council under the European Union's Horizon 2020 research and innovation programme via the Starting grant FUN-NOTCH (grant agreement ID: 757791) and via the Proof of Concept grant SHY-FEC (grant agreement ID: 963945).}

\bibliographystyle{IEEEtran}
\bibliography{IEEEabrv,PhD_refs}

\end{document}